\title{\greedyl: A Parallel Algorithm for Maximizing Constrained Submodular Functions} 
\author{Shivaram Gopal}{Purdue University}{}{}{}
\author{SM Ferdous}{Pacific Northwest National Lab}{}{}{}
\author{Hemanta Maji}{Purdue University}{}{}{}
\author{Alex Pothen}{Purdue University}{}{}{}
\authorrunning{Shivaram, SM, Alex and Hemanta} 
\keywords{Combinatorial optimization, submodular functions, distributed algorithms, approximation algorithms, data summarization.} 
\let\epsilon\varepsilon
\let\leq\leqslant
\let\geq\geqslant
\newcommand\restr[2]{{
  \left.\kern-\nulldelimiterspace 
  #1 
  \vphantom{\big|} 
  \right|_{#2} 
  }}
\newcommand{\pdfsamp}{dual\xspace}
\newcommand{\cdfsamp}{cumulative dual\xspace}
\newcommand{\Pdfsamp}{\expandafter\capitalisewords\expandafter{\pdfsamp}}
\newcommand{\Cdfsamp}{\expandafter\capitalisewords\expandafter{\cdfsamp}}
\newcommand{\Rom}[1]{\expandafter\@slowromancap\romannumeral #1@}
\newcommand{\fhat}{\widehat{f}}
\newcommand{\lsz}{Lov\'{a}sz }
\newcommand{\greedyl}{\textsc{GreedyML}\xspace}
\newcommand{\greedy}{\textsc{Greedy}\xspace}
\newcommand{\greedi}{\textsc{GreeDi}\xspace}
\def\greedyl{\mbox{\sc GreedyML}}
\def\randgreedi{\mbox{\sc RandGreedi}}
\def\greedi{\mbox{\sc Greedi}}
\newcommand{\pushright}[1]{\ifmeasuring@#1\else\omit\hfill$\displaystyle#1$\fi\ignorespaces}
\newcommand{\pushleft}[1]{\ifmeasuring@#1\else\omit$\displaystyle#1$\hfill\fi\ignorespaces}
\providecommand{\ceil}[1]{\ensuremath{\left\lceil {#1} \right\rceil}\xspace}
\newcommandx{\smf}[2][1=]{\todo[color=cyan!50,#1]{\sf \textbf{smf:} #2}\xspace}
\newcommandx{\sg}[2][1=]{\todo[color=cyan!50,#1]{\sf \textbf{SG:} #2}\xspace}
\newcommand{\customhead}[1]{\smallskip\noindent\textbf{#1.}}
\begin{document}


\maketitle

\begin{abstract}
We describe a parallel approximation algorithm for maximizing monotone submodular functions subject to hereditary constraints on distributed memory multiprocessors.
Our work is motivated by the need to solve submodular optimization problems on massive data sets, for practical contexts such as data summarization,  machine learning, and graph sparsification. 

Our work builds on the randomized distributed \randgreedi\ algorithm, proposed by  Barbosa, Ene, Nguyen, and Ward (2015). 
This algorithm computes a distributed solution by randomly partitioning the data among all the processors and then employing  \emph{a single} accumulation step in which all processors send their partial solutions to one processor. 
However, for large problems, the accumulation step  exceeds the memory available on a processor, 
and the processor which performs the accumulation
becomes a computational bottleneck. 

Hence we propose a generalization of the \randgreedi\ algorithm that employs multiple accumulation steps to reduce the memory required. 
We analyze the approximation ratio and the time complexity of the algorithm (in the BSP model).
We evaluate the new \greedyl\ algorithm on three classes of problems, and report results from large-scale data sets with millions of elements.  
The results show that the \greedyl\ algorithm can solve problems where the sequential \greedy\ and distributed \randgreedi\ algorithms fail due to memory constraints.    
For  certain computationally intensive problems, the \greedyl\ algorithm is faster than the  
\randgreedi\ algorithm. 
The observed approximation quality of the solutions computed by the \greedyl \ algorithm closely matches those obtained by the \randgreedi\  algorithm on these problems.  

\end{abstract}


\section{Introduction}
We describe \greedyl{}, a parallel approximation algorithm for maximizing monotone submodular functions subject to hereditary constraints on distributed memory multiprocessors. \greedyl{} is built on an earlier distributed approximation algorithm, which has limited parallelism and higher memory requirements.  
Maximizing a submodular function under constraints is NP-hard, but a natural iterative \greedy{} algorithm exists  that selects elements based on the marginal gain (defined later) and is $(1 - 1/\mathrm{e}) \approx 0.63$-approximate  for cardinality constraints and 
$1/2$-approximate  for matroid constraints; here $\mathrm{e}$ is Euler's number. 

Maximizing a submodular function (rather than a linear objective function)  promotes diversity in the computed solution since at each step the algorithm encourages an element with the least properties in common with the current solution set. A broad collection of practical problems are modeled using submodular functions, including data and document summarization~\cite{Mirzasoleiman}, load balancing parallel computations in quantum chemistry~\cite{Ferdous+:ACDA21}, sensor placement~\cite{Coutino+:sensing},   resource allocation~\cite{Thekumparampil+:resource},  
active learning~\cite{Golovin+:learning}, interpretability of neural networks~\cite{Elenberg+:NNs}, influence maximization in social networks~\cite{kempe2003maximizing}, diverse recommendation~\cite{chen2018fast}. 
Surveys discussing submodular optimization formulations, algorithms, and computational experiments include Tohidi et al.~\cite{Tohidi+:survey} and 
Krause and Golovin~\cite{Krause+:submodularity}. 

Our algorithm builds on the \randgreedi\  framework~\cite{barbosa}, a state-of-the-art randomized distributed algorithm for monotone submodular function maximization under hereditary constraints, which has an approximation ratio half that of the \greedy algorithm. The \randgreedi\ algorithm randomly partitions the data among all the processors, runs the standard \greedy\ algorithm on each partition independently in parallel, and then executes a {\em single accumulation step} in which all processors send their partial solutions to one processor. 
However, this accumulation step could exceed the memory available on a processor when the memory is small relative to the size of the data, or when solutions are large.
Additionally, the accumulation serializes both the computation and communication and is a bottleneck when scaled to many machines. 

Our  \greedyl\ algorithm brings additional parallelism to this step and can lower the 
memory and running time by introducing hierarchical accumulation organized through an \emph{accumulation tree}. 
Similar to \randgreedi{}, we randomly partition the data among all the 
processors, which constitute the leaves of the accumulation tree. We merge partial solutions at multiple levels in the tree, and the final solution is computed at the root. We prove that the \greedyl\ algorithm has a worst-case expected approximation guarantee of 
$(\alpha \: b) / (m+b)$,  
where $\alpha$ is the approximation guarantee for the \textsc{Greedy} algorithm, $b$ is the branching factor, and $m$ is the number of leaves in the accumulation tree.
Using the BSP model, we also analyze the time and communication complexity of the \greedyl\  and \randgreedi\ algorithms and show that the former has lower computation and communication costs than the latter. \looseness=-1

We evaluate the parallel algorithms on three representative and practical submodular function maximization problems: maximum $k$-set cover, maximum $k$-vertex dominating set in graphs, and exemplar-based clustering  (modeled by the $k$-medoid problem).  
We experiment on large data sets with millions of elements that exceed the memory constraints (a few GBs) on a single processor,  
and demonstrate how to choose the accumulation tree to have more levels to adapt to the small memory available on a processor. 
This strategy also enables us to solve for larger values of the parameter  $k$ in the problems discussed above, which corresponds to the size of the solution sought. 
We also show that the number of function evaluations on the critical path of the accumulation tree, and hence the run time,  could be reduced when the parallel algorithm is employed. 
In most cases, we find the quality of the computed solutions by our \greedyl{} closely matches those obtained by the 
\randgreedi\  algorithm on these problems despite having a worse expected approximation guarantee.

\section{Background and Related Work}
\subparagraph*{Submodular functions} A set function $f \colon 2^W \to \mathbb R^+$ 
defined on the power set of a ground set $W$ is \emph{submodular} if it satisfies the \emph{diminishing marginal gain} property.  That is,  $$ f(X \cup \{w\}) - f(X) \geq f(Y \cup \{w\}) - f(Y), \text{ for all } X \subseteq Y \subseteq W \text{ and } w \in W \setminus Y. $$
A submodular function $f$ is \emph{monotone} if for every  $X \subseteq Y \subseteq W$, we have $f(X) \leq f(Y)$. 
The \emph{constrained submodular maximization} problem is defined as follows.
$$
\max f (S) \text{ subject to } S \in \mathcal{C}, \text{where }  \mathcal{C} \subseteq 2^W \text{ is the family of feasible solutions.}
$$  
\noindent We consider {\em hereditary constraints}: 
i.e., for every set $S\in \mathcal{C}$,   every subset of $S$ is also in $\mathcal{C}$. The hereditary family of constraints includes various common ones such as \emph{cardinality constraints} ($\mathcal{C} = \{A \subseteq W: |A| \leq k\}$) and \emph{matroid constraints} ($\mathcal{C}$ corresponds to the collection of independent sets of a matroid).

\subparagraph*{ \lsz extension} For the analysis of our algorithm, we use the {\em \lsz extension}~\cite{Lovasz1983}, a relaxation of submodular functions. A  submodular function $f$ can be viewed as a function defined over the vertices of the unit hypercube,  $f:\{0,1\}^n \rightarrow\mathbb R^+$, by identifying sets $V \subseteq W$ with binary vectors of length $w=|W|$ in which the $i^{\text{th}}$ component is $1$ if $i\in V$, and $0$ otherwise. 
The {\em \lsz extension}~\cite{Lovasz1983}  $\fhat: [0,1]^w  \rightarrow \mathbb{R}^+$ is a convex extension that extends $f$ over the entire hypercube and given by,
$ \fhat(x) = \mathop{\mathbb{E}}\limits_{\theta \in \mathcal{U}[0,1] } \left[ \; f\left(\{i : x_i \geq \theta\}\right) \; \right]. $ 
Here, $\theta$ is uniformly random in $[0,1]$. The \lsz extension $\fhat$ satisfies the following properties~\cite{Lovasz1983}:

\begin{enumerate}[topsep=0pt]\label{eqn:Lova} \item $\fhat(1_S) = f(S)$,   for all $S \subseteq V$ where $1_S\in [0,1]^w$ is a vector containing 1 for the elements in $S$ and $0$ otherwise,  
\item $\fhat(x)\  \text{is convex}$, 
and \item $\fhat(c\cdot x) \geq c \cdot \fhat(x),  \text{for any}\  c\in [0, 1]$. 
\end{enumerate}

An \emph{$\alpha$-approximation} algorithm ($\alpha \in [0,1)$) for  constrained submodular maximization 
produces a feasible solution $S \subseteq W $, satisfying $f(S) \geq \alpha \cdot f(S^*)$, where $S^*$ is an optimal solution. \looseness=-1

\subsection{Related Work}
\label{sec:randgreedi}
\subparagraph*{\greedi{} and \randgreedi{}.} The iterative \greedy\ algorithm for maximizing constrained submodular functions starts with an empty solution. 
Given any current solution $S$, an element is \emph{feasible} if it can be added to the solution without violating the constraints. 
In each iteration, the \greedy\ algorithm on a dataset $V$ chooses a feasible element $e \in V$ that maximizes the \emph{marginal gain}, 
$f(S\cup \{e\}) - f(S)$, w.r.t. the current solution $S$.
The algorithm terminates when the maximum marginal gain is zero or all feasible elements have been considered. \looseness=-1 

We now discuss the \greedi{} and \randgreedi{} algorithms, which are the state-of-the-art distributed algorithms for constrained submodular maximization. The \greedi \ algorithm~\cite{Mirzasoleiman} partitions the data {\em arbitrarily} on available machines, and on each machine, it runs the  \greedy\ algorithm independently to compute a \emph{local} solution.  
These solutions are then accumulated to a single \emph{global} machine. 
The \greedy\ algorithm is again executed on the accumulated data to get a \emph{global} solution. The final solution is the best solution among all the local and global solutions.
For a cardinality constraint, where $k$ is the solution size, the \greedi\ algorithm has a worst-case approximation guarantee of $1/\Theta(\min(\sqrt{k}, m))$, where $m$ is the number of machines. \looseness=-1

Although \greedi\ performs well in practice~\cite{Mirzasoleiman}, its approximation ratio is not a constant but depends on $k$ and $m$. 
Improving on this work, 
Barbosa et al.~\cite{barbosa} proposed the \randgreedi \ algorithm, which partitions the data uniformly at random on machines and achieves an \emph{expected} approximation guarantee of $\frac{1}{2} (1- 1/\mathrm e)$ for cardinality and $1/4$ for matroid constraints. In general, it has an approximation ratio of $\alpha/2$ where $\alpha$ is the approximation ratio of the \greedy\ algorithm used at the local and global machines. We present the pseudocode of \randgreedi\ framework in Algorithm~\ref{Alg:randgreedi} in Appendix~\ref{app:pseudocode}. Note that for a cardinality constraint, both \greedi{} and \randgreedi{} perform  $O(nk(k+m))$ calls to the objective function and communicate $O(mk)$ elements to the global machine where $n$ is the number of elements in the ground set, $m$ is the number of machines, and $k$ is solution size.  

Both \greedi{} and \randgreedi{} require a single global accumulation from the solutions generated in local machines that can quickly become dominating since the runtime, memory, and complexity of this global aggregation grows linearly with the number of machines. We propose to alleviate this by introducing a hierarchical aggregation strategy that maintains an accumulation tree. Our \greedyl{} framework generalizes the \randgreedi{} from a single accumulation to a multi-level accumulation. The number of partial solutions to be aggregated depends on the branching factor of the tree, which can be a constant. Thus, the number of accumulation levels grows logarithmically with the number of machines, and the total aggregation is not likely to become a memory, runtime, and communication bottleneck with the increase in the number of machines. We refer to Appendix~\ref{subsec:complexity} for the detailed complexity comparisons of the \randgreedi{} and our \greedyl{} algorithm.

\customhead{Other work} Early approaches on distributed submodular maximization includes the $1/(2+\epsilon)$-approximate \textsc{Sample and Prune} algorithm by  Kumar et al.~\cite{Kumar}, which requires $O(1/\delta)$ rounds assuming  $O(kn^\delta \: \log{n})$ memory per machines.  Here,  $\delta > 0$ is a user parameter. \greedi{}~\cite{Mirzasoleiman} and \randgreedi{}~\cite{barbosa} are shown to be more efficient in practice than the \textsc{Sample and Prune} algorithm. 

More recent distributed approaches~\cite{Chandra, Mokhtari, Alexander} use the multi-linear extension to map the problem into a continuous function. They typically perform a gradient ascent on each local machine and build a consensus solution in each round, which improves the approximation factor to $(1- 1/\mathrm e)$.
However, we do not believe that these approaches are practical since they involve expensive gradient computations   (could be exponential-time). 
Most of these algorithms are not implemented, and the one reported implementation solves problems with only a few hundred elements in the data set~\cite{Alexander}.
 
 \section{Description of Our Algorithm}

We describe and analyze our algorithm that generalizes the \randgreedi\ algorithm from a single accumulation step to multiple accumulation steps. Each accumulation step 
corresponds to a level in an \emph{accumulation tree}, which we describe next. We assume that there are $m$ machines identified by the set of ids: $\{0,1,\ldots,m-1\}$.

\customhead{Accumulation tree} An accumulation tree ($T$) is defined by the number of machines ($m$), and branching factor ($b$). It has the same structure as a \emph{complete} $b$-ary tree with $m$ leaves, which means all the leaves are at the same depth.
The tree nodes correspond to processors and the corresponding subset of data accessible to them. The edges of the tree determine the accumulation pattern of the intermediate solutions. The final solution is generated on the root node of $T$. Thus, the branching factor $b$ of the tree indicates the maximum number of nodes that transmit data to its parent. 
The number of accumulation levels (i.e., one minus the height of the tree), denoted by $L$, is $\lceil \log_b{m} \rceil$.

To uniquely identify a node in the tree, we assign an identifier $(\ell, id)$ to each node of $T$, where $\ell$ represents the accumulation level of the node and $id$ represents the machine identifier corresponding to the node. The $id$ for each leaf node is the identifier of the machine to which the leaf node corresponds. All the leaf nodes are at the zeroth level. Each internal node receives the lowest $id$ of its children, i.e., any node $(\ell,i)$ 
has node $(\ell+1, \lfloor i/b^{\ell+1}\rfloor*b^{\ell+1})$ as the parent. Therefore, the root node will always reside at level $L$ with $id = 0$. Also, we characterize an accumulation tree $T$ by the triple $T(m, L, b)$, where $m$ is the number of leaves (machines), $L$ is the number of levels, and $b$ is the branching factor. 
\begin{figure}
\centering
\resizebox{0.78\columnwidth}{!}{
\begin{forest}
for tree={rectangle,draw, edge={<-, thick}, minimum size=4.1em, s sep = 1em, inner sep=1pt, l=6em, fill=green!20, font=\small\bfseries}
  [{${2,\!0}$},name=l2, fill=blue!20  [${1,0}$, fill=orange!20     [${0,\!0}$]
      [${0,1}$,name=l0c1]
      [${0,b-1}$,name=l0cb]
    ]
    [${1,b}$, name=l1c1, fill=orange!20     [${0,b}$]
      [${0,b+1}$,name=l0c2]
      [${0,2b-1}$,name=l0c3]
    ]
    [${1,(b\!-\!1)b}$, name=l1cb, fill=orange!20  [${0,b(b-1)}$,name=l0c4]
    [${0,b^2\!-\!b\!+\!1}$,name=l0c5]
      [${0,b^2\!-\!1}$,name=l0c6]]
    ]
  ]
  \node(first) at ([xshift=65ex]l2) {\textbf{Level 2}};
  \node(second) at ([yshift=-l]first) {\textbf{Level 1}};
  \node(third) at ([yshift=-l]second) {\textbf{Level 0}};
  \draw[thick,loosely dotted] (l1c1) to (l1cb);
  \draw[thick,dotted] (l0c1) to (l0cb);
  \draw[thick,dotted] (l0c2) to (l0c3);
  \draw[thick,dotted] (l0c5) to (l0c6);
\end{forest}

}
    \caption{An accumulation tree with $L=2$ levels, $m=b^2$ machines, and a branching factor $b$. Each node has a label of the form $(\ell, id)$. Here there are $b$ nodes as children at each level, but when there are fewer than $b^L$ leaf nodes, then the number of children at levels closer to the root may be fewer than $b$.}
    \label{fig:recTree}
\end{figure}
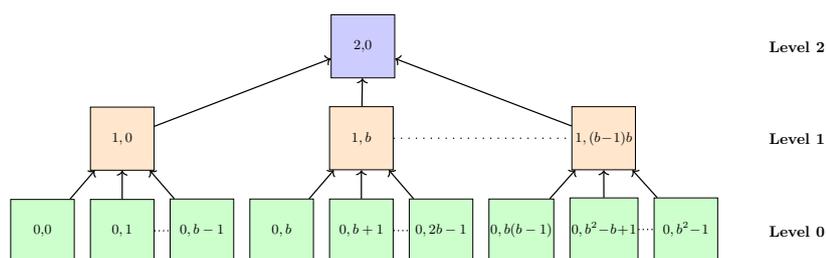



Figure \ref{fig:recTree} shows an example of a generic accumulation tree with $b^2$ leaves and branching factor $b$. The number of accumulation levels is the level of the root. Here we have $L=\lceil\log_b b^2\rceil = 2$. We also show a few realizations of different accumulation trees in Figure~\ref{fig:8tree} in the Appendix~\ref{app:acc}.

Observe that the $id$ parameter remains the same in multiple nodes that are involved in computations at multiple levels. 
For our analysis, we keep the branching factor constant across all levels. 

\customhead{Randomness} The randomness in the algorithm is \emph{only} in the initial placement of the data on the machines, and we use a random tape to encapsulate this. 
The random tape $r_W$  has a randomized entry for each element in $W$ to indicate the machine containing that element. 
Any expectation results proved henceforth are over the choice of this random tape. 
Moreover, if the data accessible to a node is $V$, we consider the randomness over just $r_V$. 
Whenever the expectation is over $r_V$, we denote the expectation as $ \mathbb{E}_V$. 

\begin{figure*}
   \resizebox{0.9\linewidth}{!}{ \small $  \greedyl(\ell, id)  = $\\
$\begin{cases}  \greedy(P_{id})  & \ell = 0 \\ 
\arg \max   \begin{cases}
                    \greedy\left( \bigcup \limits_{i \in\{0,1,\dotsc,b-1\}} \greedyl\left(\ell-1,  id + i\cdot b^{\ell-1}\right)\right) \\
                    \greedyl(\ell-1,id)
            \end{cases} & id \text{ mod }b^{\ell} = 0 \\
\text{undefined} & \text{otherwise} \end{cases} $
}
\caption{The recurrence relation for the multilevel \greedyl\ which is defined for each node in the accumulation tree. We denote the random subset assigned to machine $id$ by $P_{id}$.} 
\label{fig:RecRel}
\end{figure*}

\customhead{Recurrence relation}
Figure~\ref{fig:RecRel} shows the recurrence relation that forms the basis of the  \greedyl~ algorithm, defined for every node in the accumulation tree;  it will be the basis for the multilevel distributed algorithm. 
At level $0$ (leaves), the recurrence function returns the \greedy\ solution of the random subset of data $P_{id}$ assigned to it. 
At other levels (internal nodes), it returns the better among the \greedy solution computed from the union of the received solution sets of its children and its solution from its previous level. 
It is undefined for $(\ell, id)$ tuples that do not correspond to nodes in the tree (at higher levels). The detailed pseudocode of our algorithm is presented in Algorithm~\ref{alg:multil2} in Appendix~\ref{app:pseudocode}. \looseness=-1

\section{Analysis of Our Algorithm} 


We prove the \emph{expected approximation ratio} of \greedyl\ algorithm in Theorem~\ref{thm:approx} using three Lemmas.
We restate a result from~\cite{barbosa} that applies to the leaves of our accumulation tree and characterizes elements that do not change the solution computed by the \greedy algorithm. \looseness=-1
\begin{lemma}[\cite{barbosa}]
\label{lem:stable}
If we have \greedy$(V \cup \{e\}) = \greedy(V) $, for each element $e\in B$, then $\greedy(V \cup B) = \greedy(V)$.
\end{lemma} 

The next two Lemmas connect the quality of the computed solutions to the optimal solution at the internal nodes (in level one) of the accumulation tree. 
Lemma~\ref{lem:individual} provides a lower bound on the expected function value of the \emph{individual} solutions of the  \greedy algorithm received from the leaf nodes while  
Lemma~\ref{lem:accumulate} 
analyzes the expected function value of the \greedy{} execution over the \emph{accumulated} partial solutions. Due to space constraints, we state the lemmas here and provide the proofs in the Appendix~\ref{app:omit-proofs}.

Let $p\colon V  \to [0,1]$ be a probability distribution over the elements in $V$, and  $ A \sim V(1/m)$ be a random subset of $V$ such that each element is independently present in $A$ with probability $1/m$. 
The probability $p$ is defined as follows:  
\begin{equation*}
    p(e) = 
    \begin{cases} 
    \Pr\limits_{A \sim V(1/m)}
 \left[e \in \greedy(A\cup \{e\}) 
\right],  & \text{if} \  e\in OPT;\\ 
 0,  & \text{otherwise.} 
\end{cases} 
\end{equation*}
For any leaf node, the distribution $p$ defines the probability that each element of $OPT$ is in the solution of the \greedy\ algorithm when it is placed in the node. 

\begin{lemma}
\label{lem:individual}
Let $c$ be a leaf node of the accumulation tree, $S_c$ be the solution computed from $c$, and $V_c \subset V_{n}$ be the elements considered in forming $S_c$. If \greedy\ is an $\alpha$-approximate algorithm, then 
$ {\mathbb{E}}_{V_n}[f(S_c)] \geq \alpha \cdot \fhat(1_{OPT}-p). 
$
\end{lemma}

\begin{lemma}
\label{lem:accumulate}
 Let $D$ be the union of all the solutions computed by the $b$ children of an internal node $(1,id)$ in the accumulation tree, and $S$ be the solution from the \greedy{} algorithm on the set $D$. If \greedy\ is an $\alpha$-approximate algorithm, then $\mathbb{E}_{V_n}[f(S)] \geq \dfrac{b*\alpha}{m} \cdot \fhat(p).$ 

\end{lemma}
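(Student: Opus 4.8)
The plan is to combine the $\alpha$-approximation guarantee of \greedy\ at the internal node with the feasibility of the global optimum restricted to the accumulated set $D$, and then pass to the convex-closure (Lovász) lower bound for random sets whose marginals are prescribed.

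First I would exploit that \greedy\ is $\alpha$-approximate on the ground set $D$: for \emph{any} feasible $T \subseteq D$ with $T \in \mathcal{C}$ we have $f(S) = f(\greedy(D)) \geq \alpha \cdot f(T)$. I would apply this with $T = OPT_{\ell,id} \cap D$. Since $OPT_{\ell,id} \in \mathcal{C}$ and the constraints are hereditary, every subset of $OPT_{\ell,id}$, in particular $OPT_{\ell,id}\cap D$, is feasible, and it is trivially contained in $D$. Hence, for every fixing of the random tape,
\[ f(S) \geq \alpha \cdot f(OPT_{\ell,id}\cap D), \]
and taking expectation over $r_{V_n}$ gives $\mathbb{E}_{V_n}[f(S)] \geq \alpha \cdot \mathbb{E}_{V_n}[f(OPT_{\ell,id}\cap D)]$.

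Next I would identify the marginal distribution of the random set $R := OPT_{\ell,id}\cap D$ over the tape. An element $e \in OPT_{\ell,id}$ belongs to $D = \bigcup_c S_c$ precisely when it is selected by the unique child $c(e)$ to whose subtree the tape assigns it; no other child can select $e$, as $e$ is absent from their data. Conditioned on $e$ landing in a given child, the remaining elements of $V_{\ell,id}$ populate that child's data as an independent $1/b$-subset, so this event has probability $\Pr_{A\sim V_{\ell,id}(1/b)}[e \in \greedyl(A\cup\{e\},\ell-1,id)]$, which by symmetry across children equals $p_{\ell,id}(e)$; for $e\notin OPT_{\ell,id}$ the marginal is $0 = p_{\ell,id}(e)$. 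Thus $\mathbb{E}_{V_n}[1_R] = p_{\ell,id}$. I would then invoke that for submodular $f$ the convex closure $f^-$ coincides with the Lovász extension $\fhat$ and satisfies $f^-(x) = \min\{\,\mathbb{E}[f(R')] : \mathbb{E}[1_{R'}] = x\,\}$. Since $R$ is one random set with marginal vector $p_{\ell,id}$, we obtain $\mathbb{E}_{V_n}[f(R)] \geq f^-(p_{\ell,id})$, and chaining with the previous display yields $\mathbb{E}_{V_n}[f(S)] \geq \alpha \cdot f^-(p_{\ell,id})$.

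The main obstacle is the middle step: justifying $\Pr[e\in D] = p_{\ell,id}(e)$. This needs the symmetry argument that each child is equally likely to receive $e$, and that, conditioned on a child receiving it, the rest of that child's data is distributed as an independent $1/b$-subset of $V_{\ell,id}$, which is exactly the distribution defining $p_{\ell,id}$. One must also keep in mind that $OPT_{\ell,id}$ is a fixed set determined by $V_n$, so the only randomness in $R$ is through the placement encoded by $r_{V_n}$, ensuring the expectation is genuinely over the tape. Note that, because the convex-closure bound is a minimum over all distributions with the given marginals, I need only the marginals of $R$ and no independence among the elements of $OPT_{\ell,id}\cap D$.
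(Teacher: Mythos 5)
Your proposal is correct and follows essentially the same route as the paper's proof: apply the $\alpha$-approximation guarantee of \greedy\ to the feasible set $OPT_{\ell,id}\cap D$ (feasible by heredity), show that each $e\in OPT_{\ell,id}$ lies in $D$ with probability exactly $p_{\ell,id}(e)$ via the disjointness of the children's data and the conditional $1/b$-subset distribution, and then lower-bound $\mathbb{E}_{V_n}[f(OPT_{\ell,id}\cap D)]$ by the extension evaluated at the marginal vector. The only cosmetic difference is that you invoke the convex-closure characterization of $f^-$ (minimum over distributions with prescribed marginals), whereas the paper applies Jensen's inequality to the convex Lov\'asz extension; both yield the same bound.
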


\begin{theorem}
\label{thm:approx}
Let $T(m, L,b)$ be an accumulation tree,   
 $V$ be the ground set, and $r_V$ be a random 
mapping of elements of $V$ to the leaves of the tree $T$.
Let $OPT$ be an optimal solution computed from $V$ for the constrained submodular function $f$.  
If \greedy\ is an $\alpha$-approximate algorithm,  then 
$
{ \mathbb{E}\left[f(\greedyl(V))\right] \geq \dfrac{b\cdot \alpha}{(m+b)} f(OPT).}
$
\end{theorem}

\begin{proof}

We concentrate on a node at level 1, where after obtaining the partial solutions from the children of this node, we compute the \greedy{}   on the union of these partial solution. Let $S_c$ be any of the partial solutions, $S$ be the union of these partial solutions, and $T$ be $\arg \max \{f(S), f(S_{c})\}$. From Lemma~\ref{lem:individual} and Lemma~\ref{lem:accumulate},
\[
    \mathbb{E}[f(T)] \geq \frac{b\alpha}{m} \cdot \fhat(p)  \text{ and } \mathbb{E}[f(T)] 
    \geq \alpha\cdot \fhat(1_{OPT} - p). 
\]
By multiplying the first inequality  by $m/b$ and then adding it to the second, we get
\begin{align*}
   (m/b + 1 )\mathbb{E}[f(T)] &\geq \alpha\cdot ( \fhat(1_{OPT} - p) + \fhat(p)) 
     = \alpha\cdot \fhat(1_{OPT}) 
    &\pushright{[\text{\lsz Ext.~(2)}, {\ref{eqn:Lova}}]}\\ 
    \mathbb{E}[f(T)] & \geq \frac{\alpha}{(m/b+1)} \cdot \fhat(1_{OPT}).
\end{align*}

The theorem follows since the solution quality can only improve at higher levels of the tree. \looseness=-1

\end{proof}

\section{Experimentation}
\label{sec:setup}
\subparagraph*{Experimental Setup.} We conduct experiments to evaluate our algorithms using different accumulation tree structures and compare them with \greedy{} and \randgreedi{} to assess the quality, runtime, and memory footprints of these algorithms.  
All the algorithms are executed on a cluster computer 
with $448$ nodes, each of which is an AMD EPYC $7662$ node with $256$ GB of total memory shared by the $128$ cores. Each core operates at $2.0$ GHz frequency.  
To simulate a completely distributed environment on this cluster, we needed to ensure that the memory is not shared between nodes. Therefore, in what follows, a machine will denote one node with just one core assigned for computation, but having access to all $256$ GB of memory. We also found that this made the run time results more reproducible. 

For our experimental evaluation, we report the \emph{runtime} and \emph{quality} of the algorithms being compared. For runtime, we exclude the file reading time in each machine, and for the quality, we show the objective function value of the corresponding submodular function. Since the \randgreedi{} and \greedyl{} are distributed algorithms, we also report the \emph{number of function calls in the critical path} of the computational tree, which represents the parallel runtime of the algorithm. Given an accumulation tree, the number of function calls in the critical path refers to the maximum number of function calls that the algorithm makes along a path from the leaf to the root.  
In our implementation, this quantity can be captured by the number of function calls made by the nodes of the accumulation tree with $id=0$ since this node participates in the function calls from all levels of the tree. 

\begin{table}[t]
    \centering
    \resizebox{0.75\linewidth}{!}{
    \begin{tabular}{ll|rrr}
         \toprule
         Function & Dataset & $n= |V|$ & $\sum_u \delta(u)$ & avg. $\delta(u)$ \\
         \midrule
         \multirow{7}{22mm}{$k$-dominating set} &  AGATHA\_2015 & 183,964,077 & 11,588,725,964 & 63.32 \\
         & MOLIERE\_2016 & 30,239,687 & 6,669,254,694 & 220.54 \\
         & com-Friendster & 65,608,366 & 1,806,067,135  & 27.52 \\
         &road\_usa & 23,947,347 & 57,708,624 & 2.41 \\
         &road\_central & 14,081,816 & 33,866,826 & 2.41 \\
         &belgium\_osm & 1,441,295 & 3,099,940 &  2.14 \\
         \midrule
         \multirow{3}{22mm}{$k$-cover  } 
         &webdocs & 1,692,082 & 299,887,139 & 177.22 \\
         &kosarak & 990,002 & 8,018,988 & 8.09 \\
         &retail & 88,162 & 908,576 & 10.31 \\
         \midrule
         $k$-medoid & Tiny ImageNet & 100,000 & 1,228,800,000 & 12,288\\
         \bottomrule   
    \end{tabular}
    }
    \caption{Properties of Datasets used in the experiments. $\delta(u)$ is the number of neighbors of vertex $u$ for the $k$-dominating set problem, the cardinality of the subset $u$ for the $k$-cover problem, and the size of the vector representation of the pixels of image $u$ for the $k$-medoid problem. 
    }
    \label{tab:dataDetails}
    
\end{table}

\customhead{Datasets} In this paper, we limit our experiments to cardinality constraints using three different submodular functions described in detail in Appendix~\ref{subsec:complexity}. 

Our benchmark dataset is shown in Table~\ref{tab:dataDetails}. 
They are grouped based on the objective function and are sorted by the  $\sum_{u} \delta(u)$ values within each group (see the Table for a definition). 
For the $k$-dominating set, our testbed consists of the Friendster social network graph~\cite{friendster}, a collection of road networks from \emph{DIMACS10} dataset and the \emph{Sybrandt} dataset. 
We chose road graphs since they have relatively small average vertex degrees, leading to large vertex-dominating sets. We chose the \emph{Sybrandt} collection \cite{Agatha}\cite{Moliere} since it is a huge data set of machine learning graphs. 
For the $k$-cover objective, we use popular set cover datasets from the \emph{Frequent Itemset Mining Dataset Repository} \cite{FIMI}. 
 For the $k$-medoid problem, we use the \emph{Tiny ImageNet} dataset \cite{tiny}. \looseness=-1

\customhead{MPI Implementation} \greedyl{} is implemented using C\texttt{++11}, and compiled with g\texttt{++9.3.0}, using the {\tt O3} optimization flag. We use the Lazy Greedy~\cite{lazygreedy} variant that has the same approximation guarantee as the \greedy{} but is faster in practice since it potentially reduces the number of function evaluations needed to choose the next element (by using the monotone decreasing gain property of submodular functions). Our implementation of the \greedyl{} algorithm uses Open MPI implementation for the inter-node communication. We use the MPI\_Gather and MPI\_Gatherv primitives to receive all the solution sets from the children (Line~\ref{lin:rec} in Algorithm~\ref{alg:multil2}). We generated custom MPI\_Comm communicators to enable this communication using MPI\_Group primitives. Customized communicators are required since every machine has different children at each level. Additionally, we use the MPI\_Barrier primitive to synchronize all the computations at each level. \looseness=-1

\subsection{Experimental Results}
The experiments are executed with different accumulation trees that vary in the number of machines ($m$),  the number of levels $(L)$, and the branching factors $(b)$, in order to assess their performance. We repeat each experiment six times and report the geometric mean of the results. Unless otherwise stated, a machine in our experiments represents a node in the cluster with only one core assigned for computation.
Whenever memory constraints allow,  we compare our results with the sequential \greedy\  algorithm that achieves $(1-1/e)$ approximation guarantee. \looseness=-1  

Recall that our \greedyl{} algorithm generalizes the \randgreedi{} algorithm by allowing multiple levels in the accumulation tree, thus removing the bottleneck of a single aggregation. In the following, we verify this through a series of experiments. 

In Section~\ref{sec:NoRes}, we assess the performance of our algorithm using different accumulation tree structures. We fix the number of machines and construct the best parameters of the accumulation tree for our dataset. Additionally, the experiment also demonstrates that the number of function calls in the critical path is a good estimate of the parallel runtime.
In Section~\ref{sec:mem-exp}, we show the memory benefit of our \greedyl{} w.r.t \randgreedi{} with two experiments.
In Section~\ref{sec:vary-k}, we impose a limit of 100 MB space for each node and vary $k$, the selection size. This also simulates how the new algorithm can find applications in the \emph{edge-computing} context. In Section~\ref{sec:mem-lim}, we fix $k$  and vary the memory limits, necessitating different numbers of nodes to fit the data in the leaves. We observe the quality and runtime of different accumulation tree structures in these two experiments. Both these experiments are designed to show that the \randgreedi{} algorithm quickly runs out of memory with increasing $m$ and $k$, and by choosing an appropriate accumulation tree, our \greedyl\ algorithm can solve this problem with negligible drop in accuracy. For these experiments, we will choose the computational tree with the lowest depth that can be used with the memory limit and $k$ values. \looseness=-1

In Section~\ref{sec:scaling}, we perform a scaling experiment by varying the number of machines and using the tallest tree by setting a branching factor of two for the accumulation tree. We specifically show that even though the \randgreedi{} algorithm has a low asymptotic communication cost, it can become a bottleneck when scaled to a large number of machines. We also show how our algorithm alleviates this bottleneck. Finally, in Section~\ref{sec:k-medoid}, we perform experiments for the k-medoid objective function and show that we can provide a significant speedup by using taller accumulation trees without loss in quality. The k-medoid function is extensively used in machine learning as a solution to exemplar-based clustering problems. \looseness=-1

\subsubsection{Accumulation tree parameter selection}
\label{sec:NoRes}
\begin{figure}[t]
    \centering
    \includegraphics[width=\textwidth]{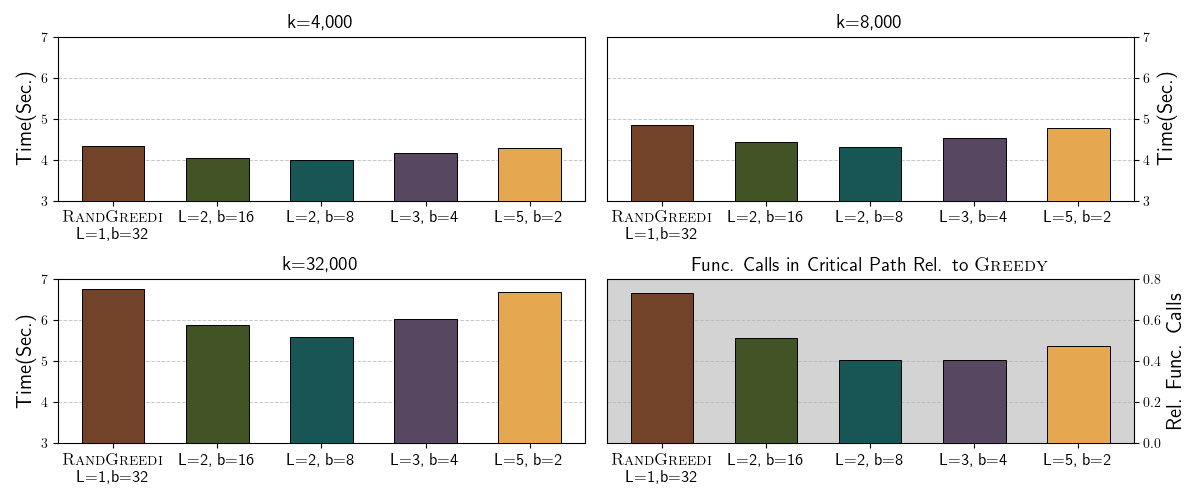}
    \caption{Geometric means of results from \greedyl\ for $k$-dominating set (on road datasets) and for $k$-cover (on set cover benchmarks) using $32$ machines. The $L$ and $b$ represent different accumulation tree configurations. The bottom right plot shows the geometric means of the number of function calls in the critical path relative to the \greedy{} algorithm for $k=32,000$.The remaining three subfigures show the execution times for different $k$ values and accumulation trees.  }
    \label{fig:Summary}
\end{figure}

In this experiment, we show results for the $k$-dominating set and $k$-coverage problem by fixing the number of machines and varying branching factors, the number of levels in the accumulation tree, and the selection set size $k$.
In Figure \ref{fig:Summary}, we provide summary results on the number of function evaluations in the critical path relative to the \greedy\ algorithm and the running times by taking a geometric mean over all nine datasets. \looseness=-1

Three subfigures (top left, top right, and bottom left) of Figure~\ref{fig:Summary} show the \emph{execution time} in seconds for the \greedyl\ and \randgreedi\ algorithms, as the number of levels and the parameter $k$ are varied. When $k$ is small (top left), there is less variation in the execution time
since work performed on the leaves dominates overall time. As $k$ increases (bottom left), the \greedyl\ algorithm becomes faster than the \randgreedi\ algorithm ($L= 1, b=32$). 
Note that although Figure~\ref{fig:Summary} presents the geometric mean results over all nine datasets, the runtime and the function values for the individual datasets follow the same trend. The largest and smallest reduction in runtime we observe is on the belgium\_osm and kosarak datasets with a reduction of around $22\%$ and 1\% across, respectively, for all $k$ values. \looseness=-1

The bottom right plot fixes $k=32,000$ and shows the \emph{number of function calls} in the critical path of the accumulation tree relative to the \greedy algorithm for different $(L,b)$ pairs. Here, the leftmost bar represents the \randgreedi{} algorithm. We observe that the relative number of function calls for \randgreedi{} is around 70\% of \greedy{}, whereas the \greedyl{} (with $L=2$ and $b=8$) reduces it by 15 percent. From Table~\ref{tab:Complexity},  the function calls complexity at a leaf node is $O(nk/m)$ and at an accumulation node, it  is $O(mk^2)$ for the \randgreedi{} algorithm. Hence, the accumulation node dominates the computation since it has a quadratic dependence on $k$, becoming a bottleneck for large $k$ values. 
This plot also shows that the number of function calls is a good indicator of the algorithm's run time and that the cost of function evaluations dominates the overall time. The other factor affecting run time is communication costs, which are relatively small and they grow with the number of levels when $k$ is very large.  

We note (not shown in the figure) that the objective function values obtained by the \greedyl \ algorithm are not sensitive to the choice of the number of levels and the branching factors of the accumulation tree and differ by less than $1\%$ from the values of the \randgreedi\ algorithm. For the {webdocs} $k$-coverage problem, however,  \greedy\ quality is about $20\%$ higher than both the \randgreedi\ and \greedyl{}.

In Table~\ref{tab:quality-levels} in Appendix~\ref{app:omit-tab}, we report the advantages of accumulating in multiple levels over choosing to stop at level one of accumulation. We use maximum $k$-cover function 
for the Friendster dataset with $k=1000$ for different branching factors at the first and final accumulation levels. We observe the objective values at the highest accumulation level are not very sensitive to the tree parameters, contrary to their sensitivity  to the approximation ratio derived in Theorem~\ref{thm:approx}.  

\begin{figure}[t]
    \centering
    \includegraphics[width=\textwidth]{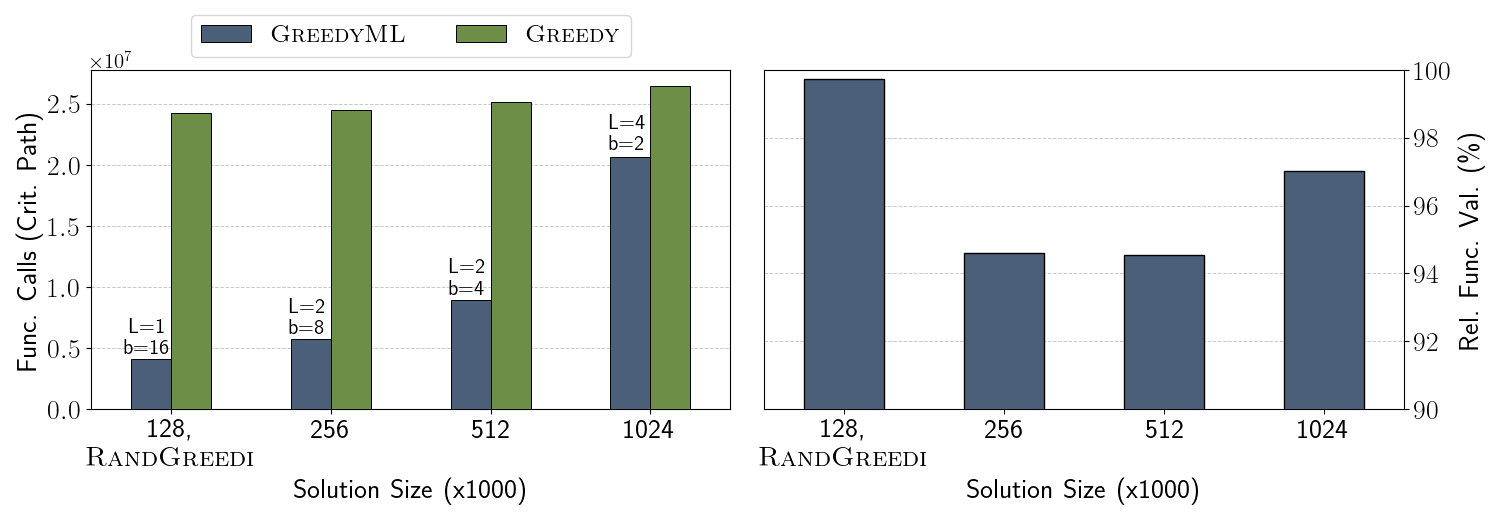}
    \caption{Results from \greedyl{} for the $k$-dominating set problem on the road\_usa dataset on 16 nodes with varying $k$. The pair($L$,$b$) shows the number of levels and branching factors chosen for specific $k$ values. The function values are relative to the \greedy{} algorithm. Note that the leftmost bars in both plots represent the \randgreedi{} results.}
    \label{fig:road}
\end{figure}

\subsubsection{Experiments with memory limit}
\label{sec:mem-exp}

\subparagraph*{Varying \emph{k}.}\label{sec:vary-k} For this experiment, we use 16 machines with a limit on the available memory of 100 MB per machine and vary $k$ from $128,000$ to $1,024,000$ for the $k$-dominating set problem on the road\_usa~\cite{roadusa} dataset. 
The small memory limit in this experiment can also be motivated from an \emph{edge computing} context. \looseness=-1

The left plot in Figure~\ref{fig:road} shows the number of function calls with varying values of $k$ for the \greedy{} (green bars) and \greedyl{} algorithms (blue bars). For the \greedyl{} (and the \randgreedi{}), we are interested in the number of function calls in the critical path since it represents the parallel runtime of the algorithm. With our memory limits, only $k=128,000$ instance can be solved using the \randgreedi{} algorithm. \looseness=-1

As we increase $k$, we are able to generate solutions using our \greedyl{} with different accumulation trees. The corresponding lowest-depth accumulation tree with the number of levels and branching factor is shown on top of the blue bars. The result shows that the number of function evaluations on the critical path in the \greedyl\ algorithm is smaller than the number of function evaluations in the sequential \greedy\ algorithm. While the number of function calls for accumulation trees with smaller $b$ values is larger than \randgreedi{}, we see that \greedyl{} can solve the problems with larger $k$ values in the same machine setup, which was not possible with \randgreedi{}. But it comes with a trade-off on parallel runtime. We observe that as we make the branching factor smaller, the number of function calls in the critical path increases, suggesting that it is sufficient to choose the accumulation trees with the largest branching factor (thus the lowest depth tree) whenever the memory allows it. \looseness=-1

The right plot of Figure~\ref{fig:road} shows the relative objective function value, i.e., the relative number of vertices covered by the dominating set compared to the \greedy{} algorithm, with varying $k$. The figure shows that the \randgreedi\ and \greedyl\ algorithms attain quality at most $6\%$ less than the serial \greedy\ algorithm. Similar trends can be observed for other datasets.\looseness=-1 

\begin{table}[t]
        \centering
        \resizebox{.8\textwidth}{!}{
        \begin{tabular}{p{24mm}|p{10mm}p{20mm}>{\raggedleft}p{4mm}p{2mm}p{3mm}>{\raggedleft}p{22mm}>{\raggedleft}p{14mm}}
        
        \toprule
            Dataset &Alg. &Mem. Limit  & $m$ & $b$&  $L$ & Rel. Func.(\%)& Time (s.)  \tabularnewline
            \midrule
            \multirow{3}{19mm}{Friendster} 
            &RG &4GB  	&8  &8 &1 &99.959  &61.994\tabularnewline
            &GML &2GB  	&16 &4 &2 &99.903  &61.352
  \tabularnewline
            &GML &1GB  	&32 &2 &5 &99.793  &79.997
 \tabularnewline
            \midrule
            \multirow{3}{19mm}{MOLIERE\_2016}
            &RG & 8GB	&8  &8 &1 &99.257  & 121.318
 \tabularnewline
            &GML &  4GB	&16 &4 &2 &99.106  & 108.764
 \tabularnewline
            &GML &	2GB &32 &2 &5 &98.990  & 161.139
 \tabularnewline
         \midrule
        \multirow{3}{19mm}{AGATHA\_2015} 
            &RG &  12GB	&8  &8 &1 &99.996  &94.122 \tabularnewline
            &GML &  6GB	&16 &4 &2 &99.995  &99.574 \tabularnewline
            &GML  & 3GB &32 &2 &5 &99.989  &104.156 \tabularnewline
            \bottomrule

        \end{tabular}
        }
        \caption{Results for $k$-dominating set on the Friendster, road\_usa and webdocs datasets. The memory size per machine is varied for the Friendster dataset. The number of machines $m$ and the accumulation tree are selected based on the size of the data and the size of the solutions to get three different machine configurations. We report the function values relative to the \greedy\ algorithm and the execution time in seconds. Note that the 4GB entries run with $L=1$ and correspond to the \randgreedi{} (RG) algorithm. We use the same three machine organizations for the road\_usa and webdocs datasets to show they follow similar trends in solution quality and execution time.}
        \label{tab:expt1}
\end{table}
\phantomsection
\subparagraph*{Varying memory limits.}
\label{sec:mem-lim}
This experiment demonstrates that the memory efficiency of the \greedyl\ algorithm enables us to solve problems on parallel machines, whereas the \randgreedi{}  and \greedy{} cannot solve them due to insufficient memory. Unlike the previous experiment (Varying $k$), where we selected the accumulation trees based on $k$, here, we fix $k$  and choose accumulation trees based on the memory available on the machines. We consider the $k$-dominating set problem and 
report results on the Friendster~\cite{friendster}, AGATHA\_2015\cite{Agatha}, and MOLIERE\_2016\cite{Moliere} dataset in Table~\ref{tab:expt1}. For the Friendster dataset, we choose $k$ such that the $k$-dominating set requires $512$ MB, roughly a factor of $64$ smaller than the original graph. 
The \randgreedi{}  algorithm (the first row) can execute this problem only on $8$ machines, each with $4$ GB of memory, since in the accumulation step, one machine receives solutions of size $512$ MB each from $8$ machines. The \greedyl{}  algorithm having multiple levels of accumulation can run on $16$ machines with only $2$ GB memory, using $L= 2$ and $b=4$. Furthermore, it can also run on $32$ machines with only $1$ GB memory, using $L=5$ and $b=2$. We repeat the same experiment for the other two datasets with these three machine configurations with corresponding memory restrictions. 

We show relative quality and running time for the three datasets from these configurations in  Table~\ref{tab:expt1}. 
Our results show that function values computed by the \greedyl{}  algorithm (the $2$ and $1$ GB results) are insensitive to the number of levels in the tree. As expected, increasing the number of levels in the accumulation tree increases the execution times due to the communication and synchronization costs involved. However, aggregating in multiple levels enables us to solve large problems by overcoming memory constraints. So, in this scenario, it is sufficient to select the number of machines depending on the size of the dataset and then select the branching factor such that the accumulation step does not exceed the memory limits.
We also notice that the \randgreedi\ algorithm has an inherently serial accumulation step, and the \greedyl\ algorithm provides a mechanism to parallelize it. \looseness=-1 

\begin{figure}[tb]
    \centering
    \includegraphics[width=\textwidth]{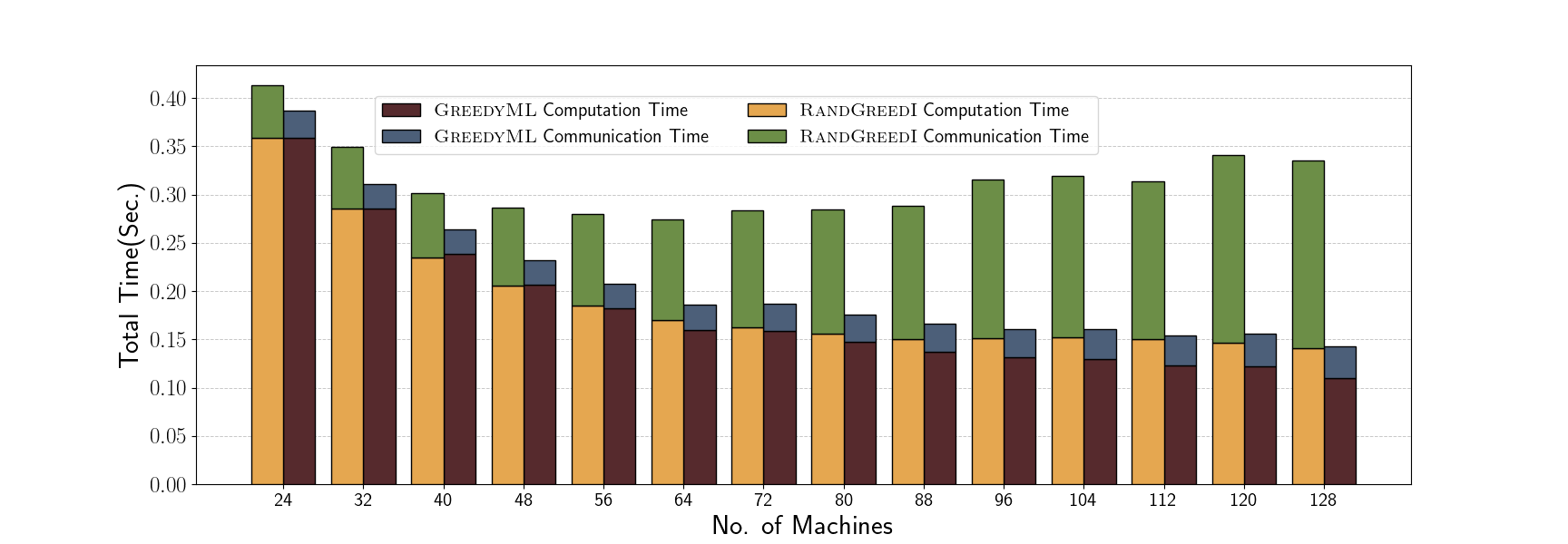}
   
    \caption{Strong scaling results of the \randgreedi{} and \greedyl{} algorithms for $k=50$ on Friendster dataset for the $k$-dominating set problem. We set $b=2$ for the \greedyl{} algorithm. }
    \label{fig:Scaling}
\end{figure} 

\subsubsection{Scaling results} 
\label{sec:scaling}
Next, we show how the \greedyl{} algorithm alleviates the scaling bottlenecks of the \randgreedi{} algorithm using the $k$-dominating set problem on the Friendster dataset. We set the branching factor $b=2$ for the \greedyl{} algorithm since this has the highest number of levels and, thus, the lowest approximation guarantee. We compare communication and computation times against \randgreedi{} algorithm from 8 to 128 machines with $k=50$. 

In Figure~\ref{fig:Scaling}, we plot the total execution time by stacking communication and computation times for the two algorithms. For \randgreedi{}, the communication time scales poorly since it increases linearly with the number of machines (See Table~\ref{tab:Complexity}). But, for \greedyl{} algorithm (with a constant branching factor, $b=2, L=\log_2 m$), the communication cost is $O(k\log m)$, which grows logarithmically in the number of machines. Figure~\ref{fig:Scaling} shows that the total communication times of the \greedyl{} algorithm are consistently around $0.25$ seconds, whereas the \randgreedi{} increases from $0.05$ second to $2$ seconds. We observe that computation times for both \randgreedi{} and \greedyl{} changes similarly with $m$, indicating that the majority of the computation work is performed at the leaf nodes. For computation time, we observe a slightly worse scaling of \randgreedi{} compared to \greedyl{}, again because the central node becomes a computational bottleneck as $m$ increases. Similar to other experiments, we observe (not shown in the plot) an almost identical quality in the solutions,  where the \greedyl{} solution has a quality reduced by less than $1\%$ from that of the \randgreedi{} algorithm.

\subsubsection{The \textit{k}-medoid problem}
\label{sec:k-medoid}

\begin{table}[t]
    \centering
    \resizebox{0.6\columnwidth}{!}{
    \begin{tabular}{p{3mm}p{4mm}|p{16mm}p{14mm}|p{16mm}p{14mm}}
    \toprule
       $L$ & $b$ & \multicolumn{2}{c|}{Local Obj.} & \multicolumn{2}{c}{Added Images}\\
       
       & & Rel. Func. Val. (\%) & Speedup & Rel. Func. Val. (\%)& Speedup \\
       \midrule
        5	&2	&92.22	&2.00 &93.69		&2.01 \\
        3	&4	&92.21	&1.96 &92.70		&1.94 \\
        2	&8	&92.73	&1.95 &92.77		&1.93 \\
        2	&16	&92.22	&1.49 &93.34		&1.44 \\
        \bottomrule

    \end{tabular}
    }
    \caption{Results from \greedyl\ for the $k$-medoid function on the \emph{Tiny ImageNet} data set using different accumulation trees. 
    The table shows the relative function values and speedup compared to the \randgreedi\ algorithm using two different schemes for computing the local objective functions on 32 nodes. Higher values are better for both schemes. Recall that $L$ and $b$ are the number of levels and branching factor, respectively.}
    \label{tab:tinyImg}
\end{table}


Our final experiment considers the $k$-medoid function that solves the exemplar-based clustering problem. 
Our dataset consists of the \emph{Tiny ImageNet} dataset \cite{tiny} containing 100K images ($64 \times 64$ pixels) with $200$ different classes with $500$ images from each class. 
We convert and normalize the image into a vector and use Euclidean distance to measure dissimilarity. We define an auxiliary vector $e_0$ as a pixel vector of all zeros. Note that, unlike the other two functions, the $k$-medoid function requires access to the full dataset to compute the functional value. Since the dataset is distributed, this poses an issue in the experiment. To overcome this, following~\cite{Mirzasoleiman,barbosa}, we calculate the objective function value using only the images available \emph{locally} on each machine. This means the ground set for each machine is just the images present in that machine.  
Additionally, they~\cite{Mirzasoleiman,barbosa} have also added subsets of randomly chosen images to the central machine to provide practical quality improvement. We have followed these techniques (\emph{local only} and \emph{local with additional images}) in the experiments for our multilevel \greedyl\ algorithm. 

In our experiments, we set $k$ to $200$ images, fix the number of machines ($m = 32$), and vary the accumulation trees by choosing different $L$ and $b$. 
For the variant with additional images, we add $1,000$ random images from the original dataset to each accumulation step. \looseness=-1

In Table \ref{tab:tinyImg}, we show the relative objective function values and speedup for different accumulation trees relative to the \randgreedi{} algorithm. We observe that the objective function values for \greedyl{} algorithm are almost similar to \randgreedi{}. Our results show that the \greedyl\ algorithm becomes gradually faster as we increase the number of levels with runtime improvement ranging from $1.45-2.01\times$.  This is because the $k$-medoid function is compute-intensive, where computation cost increases quadratically with the number of images (Table~\ref{tab:Complexity}). With $k=200$ and $m=32$, 
the \randgreedi{} algorithm has $km= 6,400$ images at the root node but only $n/m= 313$ images at the leaves; thus the computation at the root node dominates in cost. On the other hand, as we decrease the branching factor (from $b=16$ to $2$), the number of images ($kb$) in the interior nodes decreases from $3,200$ to $400$ for the \greedyl{} algorithm. This gradual decrease in compute time is reflected in the total time and in the observed speedup.

Finally, in Fig.~\ref{fig:images} (Appendix~\ref{app:omit-figs}), we show $16$ 
out of the $200$ images determined to be cluster centers by the \greedyl\ and \randgreedi\ algorithms. We can conclude that the submodular $k$-medoid function can generate a diverse set of exemplar images for this clustering problem. 
\section{Conclusion and Future work}

We have developed a new distributed algorithm, \greedyl{}, that enhances the existing distributed algorithm for maximizing constrained submodular function. We prove \greedyl{} is \nicefrac{$\alpha\cdot b$}{$(b+m)$}-approximate,  but empirically demonstrate that its quality is close to the best approximation algorithms for several practical problems. Our algorithm alleviates the inherent serial computation and communication bottlenecks of the \randgreedi\ algorithm while reducing memory requirements. This enables submodular maximization to be applied to massive-scale problems effectively.

In the future, we plan to conduct experiments for other hereditary constraints, such as matroid and $p$-system constraints. Another direction is to apply \greedyl{}  to closely related non-monotone and weakly submodular functions. Our experiments suggest that \greedyl{} delivers higher quality solutions than the expected approximation guarantees. One area of future work could involve investigating whether the approximation ratio can be further improved.
\bibliography{main}
\newpage
\appendix

\section{Accumulation Trees}
\label{app:acc}
\begin{figure}[h]
\resizebox{0.49\columnwidth}{!}{
\begin{forest}
for tree={circle,draw, thick, minimum size=4.1em, s sep = 1em, inner sep=1pt, l=6em}
    [${3,0}$
        [${2,0}$
            [${1,0}$
                [${0,0}$]
                [${0,1}$]
            ][${1,2}$
                [${0,2}$]
                [${0,3}$]
            ]
        ][${2,4}$
            [${1,4}$
                [${0,4}$]
                [${0,5}$]
            ][${1,6}$
                [${0,6}$]
                [${0,7}$]
            ]
        ]
    ]
\end{forest}
}   
\resizebox{0.49\columnwidth}{!}{
\begin{forest}
for tree={circle,draw, thick, minimum size=4.1em, s sep = 1em, inner sep=1pt, l=6em}
    [${2,0}$
        [${1,0}$
            [${0,0}$]
            [${0,1}$]
            [${0,2}$]
        ][${1,3}$
            [${0,3}$]
            [${0,4}$]
            [${0,5}$]
        ][${1,6}$
            [${0,6}$]
            [${0,7}$]
        ]
    ]
\end{forest}
}  
\resizebox{0.49\columnwidth}{!}{
\begin{forest}
for tree={circle,draw, thick, minimum size=4.1em, s sep = 1em, inner sep=1pt, l=6em}
    [${2,0}$
        [${1,0}$
            [${0,0}$]
            [${0,1}$]
            [${0,2}$]
            [${0,3}$]
        ][${1,4}$
            [${0,4}$]
            [${0,5}$]
            [${0,6}$]
            [${0,7}$]
        ]
    ]
\end{forest}
}
\resizebox{0.49\columnwidth}{!}{
\begin{forest}
for tree={circle,draw, thick, minimum size=4.1em, s sep = 1em, inner sep=1pt, l=9em}
    [${1,0}$, 
        [${0,0}$]
        [${0,1}$]
        [${0,2}$]
        [${0,3}$]
        [${0,4}$]
        [${0,5}$]
        [${0,6}$]
        [${0,7}$]
    ]
\end{forest}
}
    \caption{Accumulation tree with 8 machines and branching factors 2 (top-left), 3 (top-right), 4 (bottom-left), and 8 (bottom-right). The labels inside a node represent the identification of the node.}
    \label{fig:8tree}
\end{figure}
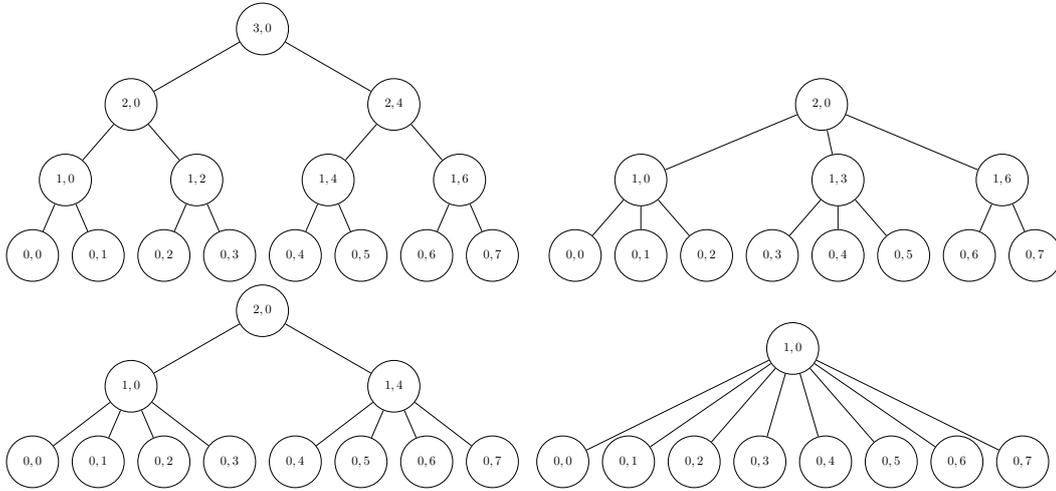
\section{Omitted Pseudocodes}
\label{app:pseudocode}
\subsection{Pseudocode of ~\randgreedi}

\begin{algorithm}[ht]
\caption{\randgreedi\ framework for maximizing constrained submodular function}
\label{Alg:randgreedi}
\begin{algorithmic}[1]
  \item[]
    { \small \Procedure{\randgreedi}{$V$: Dataset, $m$: number of machines}
    \State $S \leftarrow \emptyset$
    \State Let $\{ P_0, P_1, \ldots, P_{m-1}\}$ be an uniform random partition of $V$.
    \State \textbf{Run} $\greedy(P_i)$ on each machine $i \in [0,m-1]$ to compute the solution $S_i$  
    \State \textbf{Place} $S = \bigcup_i S_i$ on machine 0 
    \State \textbf{Run} $\greedy(S)$ to compute the solution  $T$ on machine 0
    \State \Return{$\arg\max \left\{ f(T), f(S_1), f(S_2), \dotsc, f(S_{m-1}) \right\}$}
\EndProcedure}
  \end{algorithmic}
\end{algorithm}

\subsection{Pseudocode of ~\greedyl}
Algorithm \ref{alg:multil2} describes our multilevel distributed algorithm using two procedures. The first procedure \greedyl\ is a wrapper function that sets up the environment to run the distributed algorithm. The second function $\greedyl'$ is the iterative implementation of the recurrence relation that runs on each machine.
The wrapper function partitions the data into $m$ subsets and assigns them to the machines (Line~\ref{lin:Part}). Then each machine runs the $\greedyl'$ function on the subset assigned to it (Line~\ref{lin:distRun}, Line \ref{lin:rootRun}).
The wrapper function uses and returns the solution from machine $0$ (Line~\ref{lin:rootRet}) as it is the root of the accumulation tree. \looseness=-1

The $\greedyl'$ procedure is an iterative implementation of the recurrence relation \ref{fig:RecRel} that runs on every machine. 
Each machine checks whether it needs to be active at a particular level (Line \ref{lin:chkActive})  and decides whether it needs to receive from (Line \ref{lin:rec}) or send to other machines (Line \ref{lin:send}). The function returns the solution from the last level of the machine. 
\color{green}
\begin{algorithm*}[ht]
    \caption{Our Randomized Multi-level \greedyl Algorithm }
    \label{alg:multil2}
    \begin{algorithmic}[1]
    {\small \Procedure{$\greedyl $}{$V$: Dataset, $b$: branching factor, $m$: number of machines, $r$: random tape}
    \State Let $\{P_0, P_1, \ldots P_{m-1}\}$ be uniform random partition of $V$ using $r$.\label{lin:Part}
    \For{$i=1 \ldots m-1 \text{ in parallel}$}\Comment{Run \greedyl' on all machines except $0$} 
    \State $\ell = level(i,b)$  \Comment{$level(i,b)= \max\limits_l \{l :  id \mod b^l$ is $ 0 \} $}
     \State \textbf{Run} $\greedyl'(V_i,\ell,b,i )$ to obtain $S_i$ on machine $i$  \label{lin:distRun} 
     \EndFor 
     \State \textbf{Run} $\greedyl'(V_0,\lceil \log_b m\rceil,b,0)$ to obtain $S_0$ on machine 0 \label{lin:rootRun}
     \State\Return $S_0$ \label{lin:rootRet}
   \EndProcedure}
   \end{algorithmic}
   \smallskip
   \begin{algorithmic}[1]
    {\small \Procedure{$\greedyl'$}{$P$: Partial Data-set, $\ell$: levels; $b$: branching factor, $id$: machine  ID}
    \State $S =\greedy(P)$
    \State $S_{prev}=S$
    \For{$i = 1 \dots \ell$}
        \If{$id \neq parent(id,i)$ } \label{lin:chkActive}
            \State \textbf{Send} $S_{prev}$ to $parent(id,i)$ \Comment{$parent(id, i) = b^i \cdot \lfloor id /b^i \rfloor $ } \label{lin:send}
            \State \textbf{break}
        \EndIf
        \State $D=S_{prev}$ \Comment{Prepare $D$ for current iteration}
        \For{$j = 1 \dots b-1$}{}
            \State \textbf{Receive} $D_j$ from $child(id,i,j)$ \Comment{$child(id, i, j) = id + j \cdot b^{i-1}$}  \label{lin:rec}
            \State $D = D \cup D_j$
        \EndFor
        \State \textbf{Run} $\greedy(D)$ to obtain $S$ 
        \State $S_{prev}=\arg\max\{f(S), f(S_{prev})\}$
    \EndFor
    \State \Return $S_{prev}$
    \EndProcedure}
  \end{algorithmic}
\end{algorithm*}
\color{black}

\section{Submodular Functions and Complexity}
\label{subsec:complexity} 

Our algorithm can handle any hereditary constraint, but we consider only cardinality constraints in our experiments to keep the computations simple. (More general constraints involve additional computations to check if an element can be added to the current solution set and satisfy the constraints.)
Cardinality constraints are widely used in various applications such as sensor placement~\cite{krause2008efficient}, text, image, and document summarization~\cite{lin2010multi,lin2011class}, and information gathering~\cite{krause2007near}. The problem of maximizing a submodular function under cardinality constraints can be expressed as follows.
\begin{maxi*}
{S \subseteq V}{f(S)}{} {}
\addConstraint{|S|}{\leq k}.
\end{maxi*}
Here $V$ is the ground set, $f$ is a non-negative monotone submodular function, and $k$ is the size of the solution set $S$. 

In our experiments, we have considered the following three submodular functions.
\subparagraph*{$k$-cover}
\label{func:k-cover}
Given a ground set $B$, a collection of subsets $V \subseteq 2^B$,  and an integer $k$, the goal is to select a set $S \subseteq V$ containing $k$ of these subsets to maximize $f(S)= |\bigcup_{S_i \in S}S_i |$. 
 
\subparagraph*{$k$-dominating set}
\label{func:k-dom}  The {\em $k$-dominating set} problem is a special case of the $k$-cover problem defined on graphs with the ground set $V$ as the set of vertices.  
We say a vertex $u \in V$ \emph{dominates} all its adjacent vertices (denoted by $\delta(u)$). Our goal is to select a set $S$ of $k$ vertices to dominate as many vertices as possible, i.e., $f(S) = |\bigcup_{u \in S} \delta(u) |$. The marginal gain of any vertex is the number of vertices in its neighborhood that are not yet dominated. Therefore, the problem shows diminishing marginal gains and is submodular. 

\subparagraph*{$k$-medoid problem}
\label{func:k-med}
The {\em $k$-medoid problem}~\cite{k-medoid} is used to compute exemplar-based clustering, which asks for a set of exemplars (cluster centers) representatives of a large dataset. Given a collection of elements in a ground set $V$, and a dissimilarity measure $d(u,v)$, we define a loss function (denoted by $L$) as the average pairwise dissimilarity between the exemplars ($S$) and the elements of the data set, i.e., $L(S)= \nicefrac{1}{|V|} \sum_{u\in V} \min_{v \in S} d(u,v)$. Following~\cite{Mirzasoleiman}, we turn this loss minimization to a submodular maximization problem by setting $f(S)= L(\{e_0\}-L(S \cup \{e_0\}$, where $e_0$ is an auxiliary element specific to the dataset. The goal is to select a subset $S \subseteq V$ of size $k$  that maximizes $f(S)$. 

Next, we analyze the computational and communication complexity of our  \greedyl\ algorithm using the bulk synchronous parallel (BSP) model of parallel computation~\cite{Valiant}.
We denote the number of elements in the ground set by $n = |V|, $
the solution size by $k$, the number of machines by $m$, and the number of levels in the accumulation tree by $L$. 

\customhead{Computational Complexity}
The number of objective function calls by the sequential \greedy{} algorithm is $O(nk)$, since $k$ elements are selected to be in the solution, and we may need to compute $O(n)$ marginal gains for each of them. 
Each machine in \randgreedi{} algorithm makes $O(k(n/m + mk))$ function calls, where the second term comes from the accumulation step. Each machine of the \greedyl{} algorithm with branching factor $b$ makes $O(k(n/m + Lbk) )$  calls. Recall that $L = \lceil \log_b{m} \rceil$. 

We note that the time complexity of a function call depends on the specific function being computed.
For example,  in the $k$-coverage and the k-dominating set problems,  computing a function costs $O(\delta)$,  where $\delta$ is the size of the largest itemset for $k$-coverage,  and the maximum degree of a vertex for the vertex dominating set. 
In both cases, the runtime complexity is $O(\delta k(n/m + mk))$ for the \randgreedi,  and $O(\delta k(n/m + Lbk) )$ for the \greedyl\ algorithm. 
The $k$-medoid problem computes a local objective function value and has a complexity of $O(n^\prime \delta)$ where $\delta$ is the number of features, and $n^\prime$ is the number of elements present in the machine. 
For the leaves of the accumulation tree, $n^{\prime} = n/m$, and 
for interior nodes, $n^{\prime} = bk$. 
Therefore its complexity is $O(k \delta ((n/m)^2 + (mk)^2))$ for the \randgreedi{},  and $O(k \delta ((n/m)^2 + L(bk)^2) )$ for the \greedyl{} algorithm. 

\customhead{Communication Complexity} Each edge in the accumulation tree represents communication from a machine at a lower level to one at a higher level and contains four messages. 
They are the indices of the selected elements of size $k$, the size of the data associated with each selection (proportional to the size of each adjacency list ($ \leq \delta$), the total size of the data elements, and the data associated with each selection. 
Therefore the total volume of communication is $O(k\delta)$ per child. Since at each level, a parent node receives messages from $b$ children,
the communication complexity is $O(k\delta L b)$ for each parent. Therefore the communication complexity for the \randgreedi\ algorithm is $O(k\delta m)$ and for the \greedyl\ algorithm is $O(k\delta L \ceil{m^{1/L}}).$ 
We summarize these results in 
Table~\ref{tab:Complexity}.

\begin{table}[ht]
    \centering
    \resizebox{0.92\columnwidth}{!}{
    \begin{tabular}{l|l|ccc}
    \toprule
        Algorithms & Metric & \greedy & \randgreedi & \greedyl \\
    \midrule
    \multirow{5}{15mm}{All}
        &Elements per leaf node & $n$ & $n/m $ & $n/m$ \\
        &Calls per leaf node & $nk$ & $nk/m $ & $nk/m$ \\
        &Elements per interior node & $0$ & $km$ & $k \ceil{m^{1/L}}$ \\
        &Calls per interior node & $0$ & $k^2m$ & $k^2 \ceil{m^{1/L}}$ \\
        &Total Function Calls & $kn$ & $k(n/m + km)$ & $k(n/m + Lk \ceil{m^{1/L}})$ \\ 
        \midrule 
        \multirow{4}{20mm}{$k$-cover / k-dominating set} 
        & & \multicolumn{3}{c}{$\delta$:subset size/number of neighbours} \\ 
        & Cost Per call & $\delta$ & $\delta$ & $\delta$ \\
        & Computational complexity & $\delta kn$ & $\delta k(n/m + km)$ & $\delta k(n/m + Lk \ceil{m^{1/L}})$\\
        & Communication cost & 0 & $\delta km$ & $\delta kL \ceil{m^{1/L}}$\\
        \midrule
        \multirow{5}{15mm}{$k$-medoid } 
        & & \multicolumn{3}{c}{$\delta$: number of features} \\
        & Cost Per call in Leaf node & $ \delta n$ & $\delta n/m $ & $\delta n/m $ \\ 
        &Cost Per call in interior node & 0 & $\delta km$ & $\delta k \ceil{m^{1/L}}$\\ 
        & Computational complexity & $\delta kn^2$ & $\delta k((n/m)^2 + (km)^2)$ & $\delta k((n/m)^2 + L(k \ceil{m^{1/L}})^2)$\\
        & Communication cost & 0 & $\delta km$ & $\delta kL \ceil{m^{1/L}}$ \\
        \bottomrule
    \end{tabular}}
    \caption{Complexity Results of the submodular functions for different algorithms. The number of elements in the ground set is $n$, the selection size is $k$, 
the number of machines is $m$, and the number of levels in the accumulation tree is $L$. }
    \label{tab:Complexity}
\end{table} 

\begin{table}[ht]
    \centering
    \resizebox{\columnwidth}{!}{
    \begin{tabular}{c|c||c|c}
    \toprule
        Parameter & Description & Parameter & Description \\ \midrule
        $\alpha$ & Approximation ratio of the \greedy algorithm &
        $W$ & Complete universe for the input dataset \\
        $b$ & Branching factor of the accumulation tree &
        $w$ & Size of the universe $W$ \\
        \multirow{2}{*}{$m$} & Number of leaves in the accumulation tree & $V$ & Input Dataset \\
        &Numbers of machines used for computation. &
        $n$ & Size of the input dataset $V$ \\
        $L$ & Number of levels of the accumulation tree &
        $V_c$ & Dataset corresponding to any node $c$ of the tree \\ 
        $\ell$ & level identifier for a node & 
        $S$ & Solution Set \\
        $id$ & Machine identifier for a node. &
        $k$ & Size of solution \\
        $f$ & Submodular function  & 
        $OPT$ & The optimal solution \\
        $\fhat$ & \lsz extension of function $f$ &
        \multirow{2}{*}{$p(e)$} & Probability that e $\in$ OPT is selected\\
        $P_{id}$ & Part of the dataset assigned to machine $id$ &
        &  by the \greedy algorithm when sampled from V. \\
        \bottomrule
    \end{tabular}
    }
    \caption{Notations and parameters used in the paper. }
    \label{tab:parameters}
\end{table}

\section{Omitted Proofs}
\label{app:omit-proofs}

\subsection{Proof of Lemma~\ref{lem:individual}}
\label{lem:individual:proof}
\begin{proof}
We first construct a subset of $OPT$ that contains all the elements that do not appear in $S_c$ when added to some leaf node in the subtree rooted at child $c$.
Let $O_c$ be the rejected set that can be added to $V_c$ without changing $S_c$; i.e., 
$O_c =  \{ e \in OPT :  e \notin \greedy(V_c \cup \{e\})\}.$
Therefore, $\Pr[e \in O_c] = 1 - \Pr[e \notin O_c] = 1- p(e). $

From Lemma~\ref{lem:stable}, we know that $\greedy(V_c \cup O_c) = \greedy(V_c) $ .  
Since the rejected set $O_c \subseteq OPT$ and the constraints are hereditary, $O_c \in \mathcal{C}$ (i.e $O_c$ is a feasible solution of child node c).
Then from the condition of Lemma \ref{lem:individual}, we have
\begin{align*}
f(S_c) &\geq \alpha \cdot f(O_c) \\
    \mathbb{E}[f(S_c)] &\geq \alpha \cdot \mathbb{E}[f(O_c)] = \alpha \cdot \fhat(\mathbb{E}_{V_n}[1_{O_c}]) = \alpha\cdot \fhat(1_{OPT_{\ell,id}} - p_{\ell,id}). \notag 
\end{align*}
\end{proof}

\subsection{Proof of Lemma~\ref{lem:accumulate}}
\label{lem:accumulate:proof}
\begin{proof}
We first show a preliminary result on the union set $D$. Consider an element $e\in D\cap OPT$ present in some solution $S_c$ from a child $c$. Then, 
$$\Pr[e\in S_c | e \in V_c]
    = \Pr[e\in \greedy(V_c) | e \in V_c]. $$

Since the distribution of $V_c \sim V(1/m)$ conditioned on $e \in V_c$ is identical to the distribution of $B \cup \{e\}$, where $B \sim V(1/m)$,  we have,
$$
\Pr[e\in S_c | e \in V_c] = \Pr_{B \sim V(1/m)}[ e \in \greedy(B \cup \{e\})]  = p(e). $$

Since this result holds for every child $c$, and each subset $V_c$  is disjoint from the corresponding subsets mapped to the other children,  we have 
\begin{align*}
    \Pr(e \in D\cap OPT) &= \sum_{i}\Pr[e\in S_{c_i} \cap OPT | e \in V_{c_i}]\Pr[e\in V_{c_i}]. &= \sum_i p * 1/m = bp/m.
\end{align*} 
 Now, we are ready to prove the Lemma. 
 The subset $D \cap OPT_{\ell, id} \in \mathcal{C}$, since it is a subset of $OPT_{\ell, id}$ and the constraints are hereditary. Further, since the \greedy\ algorithm is $\alpha$-approximate, we have 
\begin{align}
    f(S) &\geq \alpha \cdot f(D\cap OPT_{\ell,id}) \notag \\
\mathbb{E}_{V_n}[f(S)] &\geq \mathbb{E}_{V_n}[\alpha\cdot f(D\cap OPT)] \notag \\
    &\geq \alpha\cdot \fhat(\mathbb{E}_{V_n}[1_{D\cap OPT}]) 
    && \pushright{[\text{\lsz Ext.~(2)}, \ref{eqn:Lova}}]  
    \notag \\
    &= \alpha\cdot \fhat(bp/m) = \alpha b/m \cdot \fhat(p). 
    &&\pushright{[\text{\lsz Ext.~(3)} \ref{eqn:Lova}]}  \label{eqn:lemma3} 
\end{align} 
\end{proof}

\section{Omitted Results}
\label{app:omit-figs}
\begin{figure}[ht]
    \centering
    \resizebox{0.9\columnwidth}{!}{
    \begin{subfigure}[t]{0.49\textwidth}
        \includegraphics[width=0.8\textwidth]{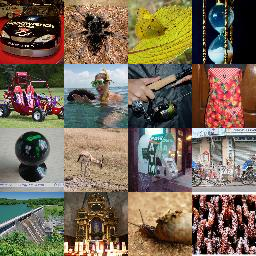}
    \end{subfigure}
    \begin{subfigure}[t]{0.49\textwidth}
        \includegraphics[width=0.8\textwidth]{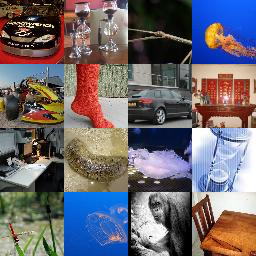}
    \end{subfigure}}
    \caption{Results from \greedyl{} for the $k$-medoid problem on the Tiny ImageNet dataset on 32 nodes with $k=200$ with no images added at each accumulation step. The subfigure on the left shows the first 16 image results for one of the runs for the \greedyl{} algorithm with branching factor $b=2$, and the subfigure on the right shows the top 16 image results for one of the runs for the \randgreedi{} algorithm.}
    \label{fig:images}
\end{figure}

\label{app:omit-tab}
\begin{table}[htb]
    \centering
    \resizebox{0.6\columnwidth}{!}{
    \begin{tabular}{|c|c|c|} \hline
        Branching Factor & First accumulation & Final accumulation \\ \hline
        4  &0.74111	&0.99994 \\
        8  &0.82971	&1.00005 \\
        16 &0.91965	&0.99994 \\
        32 &1.00003	&1.00003 \\ \hline

    \end{tabular}}
    \caption{Objective function values relative to the \greedy algorithm at the first and final accumulation steps for Friendster with selection size $k = 1000$ and $m = 32$.}
    \label{tab:quality-levels}
\end{table}

\end{document}